\def\ShowComment{True}
\def\billy#1{\marginpar{$\leftarrow$\fbox{B}}\footnote{$\Rightarrow$~{\sf #1 \textcolor{blue}{--Billy}}}}
\def\amanda#1{\marginpar{$\leftarrow$\fbox{A}}\footnote{$\Rightarrow$~{\sf #1 \textcolor{red}{--Amanda}}}}
\def\billy#1{}
\def\amanda#1{}
\begin{document}
\title{Dispersion, Capacitated Nodes, and the Power of a Trusted Shepherd}
\thanks{Part of the work was done while William K. Moses Jr. was a post doctoral fellow at the University of Houston in Houston, USA. The work of W.\,K. Moses Jr. was supported in part by NSF grants CCF-1540512, IIS-1633720, and CCF-1717075 and in part by BSF grant 2016419 and in part by UMass Lowell Pre-tenure Mathematics Faculty Seed Grant. The work of A. Redlich was supported in part by UMass Lowell Pre-tenure Mathematics Faculty Seed Grant.}

\author{William K. Moses Jr.}
\email{wkmjr3@gmail.com}
\affiliation{%
  \institution{Durham University}
  \city{Durham} % ACM template requires city and
  \country{UK} % country for affiliation
}
\orcid{0000-0002-4533-7593}

\author{Amanda Redlich}
\email{amanda\_redlich@uml.edu}
\affiliation{%
  \institution{University of Massachusetts Lowell}
  \city{Lowell} % ACM template requires city and
  \country{USA} % country for affiliation
}
\orcid{0000-0002-5217-5801}

\begin{abstract}
In this paper, we look at and expand the problems of dispersion and Byzantine dispersion of mobile robots on a graph, introduced by Augustine and Moses~Jr.~[ICDCN~2018] and by Molla, Mondal, and Moses~Jr.~[ALGOSENSORS~2020], respectively, to graphs where nodes have variable capacities. We use the idea of a single shepherd, a more powerful robot that will never act in a Byzantine manner, to achieve fast Byzantine dispersion, even when other robots may be strong Byzantine in nature. We also show the benefit of a shepherd for dispersion on capacitated graphs when no Byzantine robots are present.
\end{abstract}

\begin{CCSXML}
<ccs2012>
<concept>
<concept_id>10003752.10003809.10010172</concept_id>
<concept_desc>Theory of computation~Distributed algorithms</concept_desc>
<concept_significance>500</concept_significance>
</concept>
<concept>
<concept_id>10010147.10010178.10010219.10010222</concept_id>
<concept_desc>Computing methodologies~Mobile agents</concept_desc>
<concept_significance>500</concept_significance>
</concept>
</ccs2012>
\end{CCSXML}

\ccsdesc[500]{Theory of computation~Distributed algorithms}
\ccsdesc[500]{Computing methodologies~Mobile agents}

\keywords{Mobile robots, Mobile agents, Capacitated graphs, Dispersion, Byzantine dispersion, Fault tolerance, Trusted shepherd}
\maketitle              % typeset the header of the contribution

\setcounter{footnote}{0} 
%=============================
\section{Introduction}
\label{sec:intro}

There are many instances in modern life where computational entities must move around in some space and work together with one another to perform some task. For example,  self-driving cars interacting with one another in order to navigate intersections, overtake one another, and perform other driving behavior. Another example is that of using unmanned aerial vehicles to collect information about the weather~\cite{G08}. These real world instances can be abstracted by using the model of mobile robots on a plane or on a graph. 

In the context of mobile robots on a graph, several problems have been well studied in the past, such as exploration~\cite{CFIKP08}, 
gathering~\cite{CP02}, 
scattering~\cite{EB11},  
and dispersion~\cite{AM18}.  
Of interest to us is the problem of dispersion of mobile robots on a graph, where $k$ robots, initially arbitrarily placed on an $n$ node graph, must move around such that each node has at most $\lceil k/n \rceil$ robots on it. 
This models, for example,  $k$ electric vehicles that each need a charge in a city with $n$ charging stations. Since it may take far longer for a vehicle to charge up at a station than find a station, the goal is for as few vehicles as possible to share a station.

The problem of dispersion of mobile robots on a graph, introduced in~\cite{AM18}, assumed that each node was identical with respect to satisfying the demand of the robot. However, in real life, this may not necessarily be the case. Consider again the electric vehicle example.  Some locations may have multiple chargers while others have just one.  Here we study the problem of dispersion on \emph{capacitated} graphs to model this idea.  Each node has a (possibly zero) capacity, and the total capacity across all nodes is at least the number of robots.  Now dispersion is redefined, the robots must move around such that each node has at most \emph{its capacity} of robots on it.

We further modify the problem to match real-world situations by introducing capacitated \emph{Byzantine} dispersion.  Robots can sometimes fail, either in simple ways (crash faults) or in arbitrary and unexpected ways (Byzantine faults). Byzantine dispersion was introduced and first studied in~\cite{MMM20-byzantine,MMM21-TCS}. In previous work~\cite{MMM20-byzantine,MMM21-TCS,MMM21}, it was assumed that all robots were equally likely to be corrupted. However, in real life, the robots participating in a task may be heterogeneous~\cite{RAT19}. Due to budget constraints or availability one may use a large number of inexpensive robots to perform a task and a single more expensive robot to aid these robots. 

Thus, in this paper, we look at how a single powerful robot, which we call a \textit{trusted shepherd}, and multiple weaker robots can solve Byzantine dispersion on a capacitated graph. We assume that this trusted shepherd will never act in a Byzantine manner and all robots are able to identify the shepherd. These are reasonable assumptions: The shepherd represents a special robot that may be more powerful than other robots. We can suppose that all robots have a sensor that is tuned to pick up a specific type of signal only emitted by the shepherd. For instance, if all robots are equipped with a light sensor but only the shepherd is equipped with a source of light, then it would be impossible for other robots to fake being the shepherd.\footnote{Alternatively, one may assume that the shepherd can use cryptographic primitives to ensure that he is trusted, such as encrypting messages via a private key whose corresponding public key is known to all robots.} As the shepherd represents a more powerful robot, with extra hardware or other mechanisms to prevent faulty behavior, it is reasonable to assume that the shepherd is not Byzantine.

%=======================================
\subsection{Model}
\label{subsec:model}

\textbf{Graph Description.}
We consider a graph $G(V,E,c)$ with $V$ the set of nodes, $E$ the set of edges, and $c$ a capacity function that maps each node $v \in V$ to some value in $[0,k^p]$, where $k$ is the number of robots present in the graph and $p$ is some positive constant.\footnote{Notice that some nodes may have zero capacity.} We assume that $\sum_{v \in V} c(v) \geq k$. The nodes are anonymous, i.e., they do not have IDs. We use $n$ to denote the number of nodes and $m$ to denote the number of edges. Note that neither the graph structure nor $c$ are known to robots prior to the start of the algorithm.  A robot discovers $c(v)$ only when reaching $v$.\\

\noindent \textbf{Robot Description.}
There are $k$ ($>1$) robots on the graph, each having a unique ID in the range $[1,k^q]$, where $q$ is some positive constant. Among these robots, one of them is designated the \textit{shepherd}. This robot can never be Byzantine. All robots can detect if a  co-located robot is the shepherd. 

We assume time proceeds in synchronous rounds. Each round consists of two steps: (i) robots co-located with each other communicate and each robot performs local computation, (ii) each robot stays at its current node or moves along an edge to an adjacent node. All robots start the prescribed algorithm at the same time.

Among the $k$ robots, at most $f$ of them may be strong Byzantine~\cite{DPP14}. That is, they may deviate from prescribed algorithms, send incorrect information to other robots when communicating, and lie about their IDs. (This contrasts with weak Byzantine behavior~\cite{DPP14}, where robots may \textit{not} lie about their IDs.) There is one subtlety with respect to strong Byzantine behavior that we wish to make explicit here. When the notion of a strong/weak Byzantine robot was introduced in~\cite{DPP14} and was subsequently used in~\cite{MMM21}, it was assumed that when a robot is present at a node, it could see the labels of co-located robots, and all information exchanged is done in a ``shouting'' manner so that the information becomes common knowledge to all robots co-located on the node. This implicitly prevents a Sybil style attack, i.e., a strong Byzantine robot cannot pretend to send messages originating from multiple different robots. As this is implicitly taken care of by the model, we do not explicitly handle it in our algorithms.

When we consider the Byzantine setting in Section~\ref{sec:byz-robots}, as is usual (e.g. \cite{DPP14}) we assume that all robots have unlimited memory. When we consider the non-Byzantine setting in Section~\ref{sec:vanilla-shepherd}, we assume that all robots have limited memory with the exact values required for the given algorithm. Note that it is possible for robots to not know the value of a parameter (e.g., $k$) and yet require memory that is proportional to some function of that parameter (e.g., $\log k$) in order to perform some algorithm. This is a conditional guarantee of the algorithm and so long as the robots possess this minimum memory requirement, the algorithm will work as intended.\\

\noindent \textbf{Global Knowledge and Assumptions.}
 As a thumb rule, we assume that only the shepherd knows both the values of $n$ and $k$ unless otherwise stated. Regarding the algorithms for Byzantine dispersion, we assume that the robots do not know the value of $f$ unless otherwise stated. While the explicit knowledge requirements are made clear in each Theorem, we also mention these requirements together in one place in Section~\ref{subsec:our-contrib}.\\

\noindent \textbf{Problem Statement: Dispersion on a Capacitated Graph:}
Given $k$ robots initially placed arbitrarily on a capacitated graph of $n$ nodes, the robots must re-position themselves autonomously to reach a configuration where each node $u$ with capacity $c(u)$ has at most $c(u)$ robots on it. Subsequently the robots must terminate the algorithm.\\

\noindent \textbf{Problem Statement: Byzantine Dispersion on a Capacitated Graph:}
Given $k$ robots, up to $f$ of which are Byzantine, initially placed arbitrarily on a capacitated graph of $n$ nodes, the non-Byzantine robots must re-position themselves autonomously to reach a configuration where each node $u$ with capacity $c(u)$ has at most $c(u)$ non-Byzantine robots on it. Subsequently the robots must terminate the algorithm.

%=======================================
\subsection{Useful Procedures from Other Papers}
\label{subsec:proc-other-papers}

We utilize two procedures from earlier literature. The first is the routine Explorer-Pebble, called EMT in~\cite{DPP14}, which can be used by at least two robots to construct a map of the graph, even when $n$ is unknown. Initially, two or more robots are co-located on the same node, call it the root node. One set of them acts as an explorer, and the remaining robots act as a pebble. The explorer and pebble work together to learn the map of the graph in phases, where in phase $i$, the robots learn about the subgraph consisting of all nodes within a distance of $i$ hops from the root node and all edges between these nodes. Roughly, in each phase, the explorer takes the pebble through an edge leading outside the currently known subgraph and ``places'' the pebble on that node. Subsequently, the explorer explores the rest of the known subgraph to see if that edge leads to another node within the known subgraph; the explorer also traverses all the other edges leading outside the currently known subgraph to see if any of those edges lead to the node with the pebble on it. Once this is done, the explorer ``retrieves'' the pebble and repeats this process for all edges leading out of the subgraph found from the previous phase until the connections between all nodes at distance $i$ from the root node and the nodes already discovered are found. The entire process takes $O(n^3)$ rounds.

The second procedure we utilize, the universal exploration sequence (UXS) of~\cite{R08}, is one that allows a robot to visit all nodes of a graph of size at most $n$, when $n$ is given as an input parameter. This procedure takes $n$ as an input and generates a sequence of integers, say $x_1, x_2, \ldots$. Initially, the robot leaves the current node through port $0$. Let the next series of moves be called step 1, step 2, $\ldots$. In step $i$, if the robot entered the current node $u$ through port $p_i$, then it leaves $u$ through port $p_i + x_i \mod \delta_u$, where $\delta_u$ is the degree of $u$. For any arbitrary graph, there exists a universal exploration sequence such that this procedure takes $O(n^5 \log n)$ time. To allow for various run times depending on prior knowledge of the graph (e.g., $O(d^2 n^3 \log n)$ for a $d$-regular graph~\cite{TZ14}) or for how much memory a robot has, we simply say that the procedure takes $X(n)$ rounds.

%=======================================
\subsection{Our Contributions}
\label{subsec:our-contrib}
We make several contributions in this paper. First, we formalized the  the problems of dispersion and Byzantine dispersion in the capacitated setting in this section.

In Section~\ref{sec:byz-robots}, we show how the use of a trusted shepherd to solve Byzantine dispersion leads to better guarantees. We first develop an algorithm that solves Byzantine dispersion on capacitated graphs in $O(X(n) + n^3)$ rounds and tolerates up to $\lfloor (k-1)/2 - 1 \rfloor$ strong Byzantine robots. This algorithm requires the shepherd to know the values of $n$ and $k$. We then show how to replace the assumption that the shepherd knows the value of $n$ with the assumption that the shepherd knows the value of $f$ and develop an algorithm to solve Byzantine dispersion on capacitated graphs in $O(X(n) + n^3)$ rounds that tolerates up to $(k-1)/3 - 1$ strong Byzantine robots. Finally, we develop an algorithm that replaces the use of a trusted shepherd with the assumption that the total capacity of the graph is $\geq cf + k-f$, where $c$ is the number of non-zero-capacity nodes. This algorithm solves Byzantine dispersion in $X(n)$ rounds, tolerates up to $k-1$ strong Byzantine robots, and only requires the robots to know the value of $n$, unlike the previous algorithms. However, \emph{all} robots must know the value of $n$.  

Note that an algorithm for dispersion on uncapacitated graphs which includes map creation (e.g., \cite{MMM21,MMM21-arxiv}) can be extended to capacitated graphs; simply add the capacities of nodes to the map as they are discovered. Thus, our work can immediately be compared with prior algorithms, both in terms of time and Byzantine tolerance.  We include details in a table below.

Finally, in Section~\ref{sec:vanilla-shepherd}, we show that a shepherd aids in dispersion on a capacitated graph even without Byzantine robots. In this setting we use the shepherd's memory to store a map of the whole graph.  
We explain the technical challenges that render capacitated non-Byzantine dispersion a non-trivial task.  We then develop a wrapper algorithm that (given any algorithm $\mathcal{A}$ for dispersion on uncapacitated graphs in $T_{\mathcal{A}}$ rounds and with $M_{\mathcal{A}}$ bits of memory per robot) solves dispersion on capacitated graphs in time $O(T_{\mathcal{A}} + m)$ with a requirement of $O(M_{\mathcal{A}})$ bits of memory for the non-shepherd robots and $O(M_{\mathcal{A}} + m \log k)$ bits of memory for the shepherd robot. We note that this wrapper algorithm works when $k>n$ and the values of $T_{\mathcal{A}}$, $n$, $k$, and any global knowledge parameters needed to run $\mathcal{A}$ are known to all robots.

We also develop algorithms that handle all values of $k$ with less global knowledge. If only the shepherd knows $n$ and no robot knows $k$, we give an algorithm using UXS that takes $O((X(n) + n^3 + m)$ time and $O(M_x + m \log (nk))$ bits of memory (where $M_x$ is the memory to construct and use a UXS) for the shepherd and $O(\log k)$ bits of memory for the other robots. If no robot knows $n$ or $k$, we show a $O(n^3 + m)$ time algorithm that assumes the shepherd is initially co-located with at least one other robot, the shepherd has $O(m \log (nk))$ bits of memory, and the remaining robots have $O(\log k)$ bits of memory.

We note that all our algorithms are deterministic in nature.

%========================================
\subsection{Comparison with Related Work}

The problem of dispersion of mobile robots on graphs was originally introduced in~\cite{AM18}. Subsequent work~\cite{KA19,KMS19,SSKM20,KS22} in the synchronous system focused on reducing the time-memory trade-offs to solve the problem. The current best known algorithm is that of~\cite{KS22}.

The problem has been extended to the asynchronous system~\cite{KA19,KS22}, dynamic graphs~\cite{AAMSS18,KMS20-ICDCS}, and randomized algorithms~\cite{RSSS19,MM19,DBS21} among others. 
Here we consider robots with Byzantine faults~\cite{MMM20-byzantine,MMM21-TCS,MMM21}.

In order to compare our work with prior related work, we first set our results in the same context. That is, we make the same assumption as prior work~\cite{MMM21,MMM21-arxiv} that $k=n$ (observe that in this case knowledge of $n$ and knowledge of $k$ are one and the same). Note that~\cite{MMM21-arxiv} contains the strongest results, and those results are what we compare ours with. In particular, we wish to highlight 3 of our results in this context. The algorithm in~\cite{MMM21-arxiv} handles up to $n-1$ weak Byzantine robots, where robots start from an arbitrary configuration, with the assumption that the quotient graph of the input graph is isomorphic to the input graph. We give an algorithm (Theorem~\ref{the:byz-input-condition}) that handles up to $n-1$ \textit{strong} Byzantine robots where robots start from an arbitrary configuration with the assumption that the total capacity of the input graph is $\geq cf + n-f$, where $c$ is the number of non-zero capacity nodes.   They have multiple results that do not require a restriction on the input graph and can handle robots starting from an arbitrary configuration. However, (i) their algorithms take \emph{asymptotically longer} than our algorithm (Theorem~\ref{the:byz-nk}), (ii) our algorithm has \emph{strong Byzantine fault tolerance} that is similar but slightly less than the best weak Byzantine fault tolerance of their algorithms, and (iii) when compared with their algorithm that handles strong Byzantine robots, ours \emph{requires one less parameter} to be known (i.e., $f$) and runs exponentially faster. 
A comparison of a subset of our results with prior work may be found in Table~\ref{table:results}. We achieve these improvements through use of a ``trusted shepherd''.  

To the best of our knowledge, no non-trivial time lower bounds are known for dispersion or Byzantine dispersion and a trivial lower bound of $\Omega(n)$ rounds holds.

\begin{table*}[ht]
	\caption{Comparison of our results and previous results for Byzantine dispersion of $k$ robots on an $n$ node capacitated graph in the presence of at most $f$ Byzantine robots. Note that previous results assume $k=n$ (so knowledge of $n$ implies knowledge of $k$), whereas our results are for any $k$ robots. In order to accurately compare our results with previous work, one should substitute $k=n$ in our results. The third column indicates whether all robots start at the same node (gathered) or from an arbitrary configuration. The fourth column indicates up to how many Byzantine robots the algorithm can tolerate. The fifth column indicates whether the algorithm can handle strong Byzantine robots. The final column indicates what information is required knowledge for the robots or the shepherd. Note that $\Lambda_{good}$ is the length of the largest ID among non-Byzantine robots, $\Lambda_{all}$ is the length of the largest ID among all robots, and $X(n)$ is the number of rounds required to explore any graph of $n$ nodes.}
\centering %\vspace{1em}
		\resizebox{1.0\columnwidth}{!}{%
	\begin{tabular}{|c|c|c|c|c|c|}
		\hline
		Paper &  Running Time &  Starting & Byzantine & Handles Strong & Required\\
		  &  (in rounds) &   Configuration & Tolerance &  Byzantine Robots & Knowledge  \\
		\hline
		\hline
		\cite{MMM21,MMM21-arxiv}$^\dagger$*  & $polynomial(n)$ &  Arbitrary & $n-1$ & No & $n$\\
		\hline
		\textbf{This paper (Theorem~\ref{the:byz-input-condition})$^\dagger$} &  \textbf{$X(n)$} & \textbf{Arbitrary} & \textbf{$k-1$} & \textbf{Yes} & \textbf{$n$} \\
		\hline
		\hline
		\cite{MMM21,MMM21-arxiv}$^\dagger$$^\maltese$  & $O(n^4 |\Lambda_{good}| X(n))$  & Arbitrary & $\lfloor n/2 -1 \rfloor$ & No & $n$ \\
		\hline
		\cite{MMM21,MMM21-arxiv}$^\dagger$$^\diamond$  & $O((f +  |\Lambda_{all}|) X(n))$ &  Arbitrary & $O(\sqrt{n})$ & No & $n$ \\
		\hline
		\cite{MMM21,MMM21-arxiv}$^\dagger$  & $exponential(n)$  & Arbitrary & $\lfloor n/4 - 1\rfloor$ &  Yes & $n$ and $f$  \\
		\hline
		\textbf{This paper (Theorem~\ref{the:byz-nk})$^\aleph$} & \textbf{$O(X(n) + n^3)$}  & \textbf{Arbitrary} & \textbf{$\lfloor (k-1)/2  - 1 \rfloor$} & \textbf{Yes}  & \textbf{$n$ and $k$}\\
		\hline
		\hline
		\cite{MMM21,MMM21-arxiv}$^\dagger$  & $O(n^4)$ &  Gathered & $\lfloor n/2 - 1 \rfloor$ & No & $n$\\
		\hline
		\cite{MMM21,MMM21-arxiv}$^\dagger$  & $O(n^3)$ &  Gathered & $\lfloor n/3 - 1 \rfloor$ & No & $n$\\
		\hline
		\cite{MMM21,MMM21-arxiv}$^\dagger$  & $O(n^3)$  & Gathered & $\lfloor n/4 - 1\rfloor$ & Yes & $n$\\
		\hline
		\textbf{This paper (Corollary~\ref{cor:byz-nk})}  &   \textbf{$O(n^3)$}  & \textbf{Gathered} & \textbf{$\lfloor (k-1)/2 - 1 \rfloor$} & \textbf{Yes} & \textbf{$n$ and $k$}\\
		\hline
		\hline
        \multicolumn{6}{|l|}{$^\dagger$This result assumes $k=n$.}\\
		\multicolumn{6}{|l|}{*This result holds only for those graphs where the quotient graph is isomorphic to the original graph.} \\
		\multicolumn{6}{|l|}{$^\dagger$This result holds only for those graphs where the total capacity of the graph $\geq cf + k-f$, where $c$ is the number of} \\
		\multicolumn{6}{|l|}{\hspace{1em}  non-zero capacity nodes.} \\
		\multicolumn{6}{|l|}{$^\maltese$Since $|\Lambda_{good}| = O(\log n)$ and $X(n) = \Tilde{O}(n^5)$ (see~\cite{AKLLR79,TZ14}), $O(n^4 |\Lambda_{good}| X(n)) = \Tilde{O}(n^9)$.} \\
		\multicolumn{6}{|l|}{$^\diamond$Since $|\Lambda_{all}| = O(\log n)$, $f=O(\sqrt{n})$, and $X(n) = \Tilde{O}(n^5)$ (see~\cite{AKLLR79,TZ14}), $O((f +  |\Lambda_{good}|) X(n)) = \Tilde{O}(n^5 \sqrt{n})$.} \\
		\multicolumn{6}{|l|}{$^\aleph$Since $X(n) = \Tilde{O}(n^5)$ (see~\cite{AKLLR79,TZ14}), $O(X(n) + n^3) = \Tilde{O}(n^5)$.} \\
		\hline
	\end{tabular}
		}
	\label{table:results}
\end{table*}

%=============================
\section{The Power of a Shepherd in the Land of Byzantine Robots}
\label{sec:byz-robots}

%===========================
%===========================
\subsection{Algorithm \& Analysis}
\label{subsec:byz-alg-analyis}
In this section, we present an algorithm that utilizes a trusted shepherd to allow robots to solve Byzantine dispersion on a capacitated graph, tolerating up to $\lfloor (k-1)/2 \rfloor$ strong Byzantine robots.\\

%=============================
\noindent \textbf{Brief Description.}
The algorithm is similar in structure to the algorithms from~\cite{MMM21}, i.e. it can be broken down into three stages, (i)~gathering, (ii)~map creation, and (iii)~dispersion. However, we utilize the shepherd to execute these phases differently.  In stage 1, the shepherd uses a UXS to find and gather the remaining robots. In stage 2, the gathered robots participate in an Explorer-Pebble routine, where the shepherd acts as the explorer and the remaining robots act as the pebble, to construct a map of the graph. Finally, in stage 3, the shepherd leads the other robots to find nodes to settle down at, before settling down itself. \\

%=============================
\noindent \textbf{Detailed Algorithm.}
The algorithm proceeds in 3 stages.  
In stage 1, the shepherd moves through the graph according to a UXS for $X(n)$ rounds. 
Other robots stay at their initial nodes until the shepherd passes through their node, then follow it.  When all the nodes have been explored (so all the robots are gathered), the shepherd ask for IDs of all robots and remembers them. 

In stage 2, the robots perform the Explorer-Pebble routine.  The shepherd acts as the explorer while the other robots move as a group to act as the pebble. Importantly, the shepherd only treats the set of robots it finds on a node as a pebble when the size of the set is at least $\lceil (k-1)/2 \rceil$.  

In stage 3, the shepherd takes the robots on a depth first search of the spanning tree of the graph and tell robots where to settle down. The shepherd decides where to settle robots as follows.  
The shepherd maintains a list of which robots have been settled, also a list of which nodes have been allotted robots.  At each node $u$ that the shepherd visits (including the initial node it starts this stage on) with some non-zero capacity $c(u)$, the shepherd assigns the at most $c(u)$ lowest ID robots (excluding itself) that have not yet been settled to that node. The shepherd also updates its list of IDs that have been settled. Note that strong Byzantine robots may fake their IDs initially, resulting in multiple robots claiming to have the same ID. In this case, the shepherd treats all robots with the same claimed ID the same and tells all of them to sit at the same node $u$ when that ID falls within the $c(u)$ lowest IDs. After all IDs, excluding itself, have been settled, the shepherd moves to a node with sufficient capacity and settles down as well. When a robot is told to settle at a given node, it does so and terminates the algorithm. Once the shepherd settles at a node, it also terminates the algorithm.\\

%=============================
\noindent \textbf{Analysis.}

\begin{theorem}\label{the:byz-nk}
There exists an algorithm that allows $k$ robots, up to $f$ of which are strong Byzantine robots, that are initially arbitrarily located on an $n$ node capacitated graph to solve Byzantine dispersion in $O(X(n) + n^3)$ rounds when $f < \lfloor (k-1)/2 \rfloor$. Furthermore, $k$ and $n$ must be known to the shepherd.
\end{theorem}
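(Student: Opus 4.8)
The plan is to prove correctness stage by stage and then bound the running time. I first record the population bounds that drive everything: the hypothesis $f < \lfloor (k-1)/2 \rfloor$ gives $f \le \lfloor (k-1)/2 \rfloor - 1$, so the number of non-Byzantine non-shepherd robots is at least $k-1-f \ge k - \lfloor (k-1)/2 \rfloor = \lceil (k-1)/2 \rceil + 1$, while the number of Byzantine robots is at most $f \le \lceil (k-1)/2 \rceil - 1 < \lceil (k-1)/2 \rceil$. \emph{Stage 1 (gathering).} I would show that after the shepherd runs its UXS for $X(n)$ rounds, every non-Byzantine robot is co-located with the shepherd: since $n$ is known, the UXS visits every node within $X(n)$ rounds; every non-Byzantine robot obeys the rule of attaching to the shepherd the first time it is visited and thereafter copying the shepherd's deterministic moves; and no robot can impersonate the shepherd, since all robots detect the shepherd. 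Byzantine robots need not be gathered --- only the bounds above matter, namely that at least $\lceil (k-1)/2 \rceil + 1$ non-Byzantine non-shepherd robots are present when Stage 2 begins, and the shepherd has recorded the (possibly spoofed) ID reported by each of the $k-1$ other robots.

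\emph{Stage 2 (map creation).} Next I would argue that the shepherd, acting as explorer with that group as the pebble, correctly executes Explorer-Pebble and learns the true graph together with every capacity $c(v)$ (recording $c(v)$ the first time the exploration reaches $v$). The only ways Byzantine robots can interfere are (i) forming a decoy ``pebble'' on another node, (ii) joining the genuine pebble, or (iii) feeding false messages to the explorer. I would handle (i) with the size threshold: the genuine pebble always contains the $\ge \lceil (k-1)/2 \rceil + 1$ non-Byzantine robots that were placed there and that stay put until the explorer retrieves them, whereas any decoy consists solely of Byzantine robots and so has size $\le f < \lceil (k-1)/2 \rceil$; thus the shepherd never mistakes a decoy for the pebble and never fails to recognize the genuine pebble at any of the polynomially many checks the routine performs. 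Point (ii) is harmless because a larger pebble is still a pebble and the explorer uses only the pebble's \emph{location}; point (iii) is harmless because the explorer is the non-Byzantine shepherd, whose choices depend only on its own observations --- ports, degrees, the presence of a sufficiently large group --- not on messages from the pebble. Hence the shepherd builds the true capacitated map in $O(n^3)$ rounds.

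\emph{Stage 3 (dispersion).} Let $I$ be the set of distinct IDs reported in Stage 1 by the $k-1$ non-shepherd robots, so $|I| \le k-1$ and $I$ contains the (unique) true ID of every non-Byzantine non-shepherd robot, since such robots report truthfully. Walking a spanning tree by DFS, at each node $u$ the shepherd assigns at most $c(u)$ not-yet-assigned elements of $I$ to $u$; since the DFS visits every node and $\sum_u c(u) \ge k > k-1 \ge |I|$, every element of $I$ is assigned to exactly one node, so each non-Byzantine robot hears its own ID called at some node, settles there, and terminates. Because node $u$ receives at most $c(u)$ distinct assigned IDs and non-Byzantine robots have pairwise distinct true IDs, at most $c(u)$ non-Byzantine robots settle at $u$ (Byzantine robots sharing a spoofed ID do not count toward the constraint). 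Finally, at most $k-1$ units of capacity have been consumed while $\sum_u c(u) \ge k$, so the shepherd can move to a node with a free slot, settle there without violating its capacity, and terminate. This stage is a single DFS of a spanning tree, $O(n)$ rounds.

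\emph{Running time and main obstacle.} Summing the stages gives $X(n) + O(n^3) + O(n) = O(X(n) + n^3)$ rounds, and the shepherd uses only its knowledge of $n$ (for the UXS and Explorer-Pebble) and of $k$ (for the threshold $\lceil (k-1)/2 \rceil$ and to detect when gathering is done), matching the statement. I expect Stage 2 to be the delicate part: one must check that \emph{no} adversarial schedule of Byzantine movements, over all the checks Explorer-Pebble performs, can make the explorer accept a decoy or overlook the genuine pebble, and that strong-Byzantine ID spoofing cannot --- given the model's built-in prevention of Sybil-style attacks --- let one Byzantine robot pass as many. The inequality $f < \lfloor (k-1)/2 \rfloor$ is exactly what places the decoy bound ($\le f$) and the genuine-pebble bound ($\ge \lceil (k-1)/2 \rceil + 1$) on opposite sides of the threshold $\lceil (k-1)/2 \rceil$.
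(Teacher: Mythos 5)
Your proposal is correct and follows essentially the same three-stage argument as the paper: gathering via the shepherd's UXS, map construction via Explorer-Pebble with the pebble-size threshold $\lceil (k-1)/2\rceil$ separating the genuine pebble (size $\ge \lceil (k-1)/2\rceil+1$ non-Byzantine robots) from any Byzantine decoy (size $\le f < \lceil (k-1)/2\rceil$), and ID-based settling along a DFS. Your version is somewhat more detailed than the paper's (e.g., in making the counting explicit and in noting that total capacity $\ge k$ guarantees every ID, and finally the shepherd, gets a slot), but the decomposition and key ideas are the same.
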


\begin{proof}
Let us first analyze the running time. Stage 1 completes in $X(n)$ rounds. Stage 2 takes $O(n^3)$ rounds. Finally, stage 3 takes $O(n)$ rounds, resulting in time complexity $O(X(n)+n^3)$.

As for correctness, we argue the following three things: (i) at the end of stage 1, all non-Byzantine robots are gathered together, (ii) at the end of stage 2, the shepherd has an accurate map of the graph, and (iii) at the end of stage 3, Byzantine dispersion is achieved.

To argue (i), notice that the non-Byzantine robots that are not the shepherd do not move until they are collected by the shepherd. The shepherd is guaranteed to visit every node in the graph due to running a UXS sequence parameterized by $n$. After $X(n)$ rounds, it is guaranteed that the shepherd would have collected all robots that stayed put, i.e., all non-Byzantine robots.

To see that (ii) holds, notice that $f< \lfloor (k-1)/2 \rfloor$. Since the shepherd won't treat a group of robots as a pebble if the size is $< \lceil (k-1)/2 \rceil$, it cannot be fooled by the Byzantine robots. Thus, at the end of stage 2, the shepherd will have an accurate map of the graph.

To see that (iii) holds, notice that robots are settled as per their IDs. At the beginning of stage 3, the shepherd takes stock of the claimed IDs of all robots. 
Since the robot IDs provided by non-Byzantine robots are unique with respect to one another, when the shepherd indicates which IDs should settle down at a given node, the number of non-Byzantine robots that settle down at that node will not exceed the capacity of the node. Thus, at the end of stage 3, Byzantine dispersion is achieved.
\end{proof}

Notice that if robots are initially gathered, then they only need to run stage 2 and stage 3 of the algorithm, resulting in a faster run time. This is also an improvement on the total memory requirements of robots compared with~\cite{MMM21}; here only one robot, the shepherd, creates and stores the map compared with all robots creating and storing maps in~\cite{MMM21}.

\begin{corollary}\label{cor:byz-nk}
There exists an algorithm that allows $k$ robots, up to $f$ of which are strong Byzantine robots, to solve Byzantine dispersion on an $n$ node graph in $O(n^3)$ rounds when $f < \lfloor (k-1)/2 \rfloor$ and all robots are initially gathered at the same node. Furthermore, $k$ and $n$ must be known to the shepherd.
\end{corollary}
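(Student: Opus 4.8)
The plan is to obtain the corollary as an immediate specialization of Theorem~\ref{the:byz-nk}. First I would observe that stage 1 of the algorithm is the only part that does work under the assumption of an arbitrary starting configuration: its sole purpose is for the shepherd to traverse a UXS for $X(n)$ rounds in order to collect all the (stationary) non-Byzantine robots onto a single node. If the robots are already gathered at a common node at the start, this stage is vacuous and can simply be deleted from the algorithm; the shepherd begins directly with stage 2. So the algorithm I would invoke is exactly the one from Theorem~\ref{the:byz-nk} with stage 1 removed.

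Next I would account for the running time. By the analysis in the proof of Theorem~\ref{the:byz-nk}, stage 2 (the Explorer--Pebble routine with the shepherd as explorer and the block of $\geq \lceil (k-1)/2 \rceil$ robots as pebble) takes $O(n^3)$ rounds, and stage 3 (the DFS over the spanning tree during which the shepherd assigns the lowest-ID unsettled robots to each node up to its capacity) takes $O(n)$ rounds. Since stage 1 no longer contributes its $X(n)$ term, the total is $O(n^3)$ rounds, as claimed.

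For correctness I would reuse, essentially verbatim, parts (ii) and (iii) of the proof of Theorem~\ref{the:byz-nk}. Part (i) is no longer needed — the robots are gathered by hypothesis rather than by stage 1 — but the precondition it established (all non-Byzantine robots co-located when stage 2 begins) still holds. Part (ii) then goes through unchanged: because $f < \lfloor (k-1)/2 \rfloor$ and the shepherd refuses to treat any set of fewer than $\lceil (k-1)/2 \rceil$ robots as the pebble, the Byzantine robots cannot spoof a valid pebble, so the shepherd emerges from stage 2 with an accurate map (including the capacities it recorded upon first visiting each node). Part (iii) is likewise unchanged: since the IDs of the non-Byzantine robots are mutually distinct, when the shepherd names the $\leq c(u)$ lowest unsettled IDs at a node $u$, at most $c(u)$ non-Byzantine robots settle there; after all IDs are placed the shepherd itself settles at a node with remaining capacity (which exists since $\sum_v c(v) \ge k$), and all robots terminate. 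Hence Byzantine dispersion is achieved. I would also note that the knowledge requirements are inherited from the theorem: only the shepherd needs $n$ and $k$.

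The only point requiring any care — and the closest thing to an obstacle — is checking that stage~2 is actually executable starting from a gathered configuration without stage~1 having run first. This is immediate: the Explorer--Pebble routine from~\cite{DPP14} only requires two or more robots co-located on a common root node to begin, and with $k>1$ robots gathered and at most $f < \lfloor(k-1)/2\rfloor$ of them Byzantine there are at least $\lceil (k-1)/2 \rceil$ non-Byzantine robots available to form the pebble, so the threshold test the shepherd applies is satisfiable. Everything else is a direct quotation of the already-proved theorem.
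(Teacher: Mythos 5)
Your proposal is correct and matches the paper's own justification: the paper derives the corollary exactly by observing that a gathered start makes stage~1 unnecessary, so only stages~2 and~3 run, giving $O(n^3)$ rounds with the correctness argument of parts~(ii) and~(iii) of Theorem~\ref{the:byz-nk} carrying over unchanged. Your extra check that the Explorer--Pebble routine is executable from the gathered configuration is a harmless (and sensible) elaboration of the same route.
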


%=============================
%===========================
\subsection{Replacing the Shepherd's Knowledge Requirement of $n$ with $f$}
\label{subsec:byz-replace-n-f}
In this section, we design an algorithm that substitutes the shepherd's knowledge requirement of $n$ with that of $f$. However, the tolerance of the algorithm to Byzantine robots is reduced, i.e., the algorithm can only handle $f< (k-1)/3$ strong Byzantine robots.\\

%=============================
\noindent \textbf{Brief Description.}
This is a three stage process. In stage one, all non-shepherd robots perform UXSes with input parameter $2^i$ for increasing values of $i=1,2,3,\ldots$ until they find the shepherd. The shepherd waits until at least $k-f-1$ other robots find it and then moves to stage two. In stage two, the shepherd and the gathered robots construct the map of the graph using an explorer-pebble routine, resulting in the shepherd knowing the value of $n$. The final stage consists of the shepherd waiting at a node for a sufficient amount of time to allow for  any remaining non-Byzantine robots find it, then settling the robots and itself as before.\footnote{Recall that $f$ is only an upper bound on the number of Byzantine robots and so there may be more non-Byzantine robots that did not find the shepherd yet.}\\

%=============================
\noindent \textbf{Detailed Algorithm.}
The algorithm proceeds in three stages corresponding to (i) gathering, (ii) calculating the value of $n$, and (iii) waiting for any remaining non-Byzantine robots to find the shepherd.

In stage one, every robot that is not a shepherd runs UXSes with input parameter $2^i$ for increasing values of $i=1,2,3, \ldots$ until the robot finds the shepherd. Once the robot finds the shepherd, it follows the shepherd. The shepherd waits until $k-f-1$ other robots find it and then moves to stage two.\footnote{Note that it may be the case that more than $k-f-1$ robots find the shepherd and follow it if several robots find the shepherd in the same round.}

In stage two, the shepherd first asks the (at least $k-f-1$) other robots for their IDs and then performs an Explorer-Pebble routine to construct a map of the graph. The shepherd acts as the explorer and the remaining robots act as the pebble. To ensure that the shepherd is not fooled by Byzantine robots pretending to be a pebble, the shepherd ignores groups of $<k-2f-1$ robots and only considers a group of at least $k-2f-1$ robots with IDs belonging to the set of IDs collected by it at the beginning of stage two to be the pebble. At the end of stage two, the shepherd has the map and knows the value of $n$. Note that if a non-shepherd robot, that did not find the shepherd in stage one, finds the shepherd in stage two, it simply follows the shepherd until the shepherd asks it to terminate the algorithm in stage three.

At the beginning of stage three, the shepherd knows the current round number and the value of $n$. From these two values, the shepherd calculates the round number $r$ by which any robot that is still performing UXS can start and finish a complete UXS with some input parameter $2^j$ such that $2^j \geq n$ and thus find the shepherd. The shepherd waits for $r$ steps, then leads the other robots to find nodes to settle down at, before settling down itself.\\

%=============================
\noindent \textbf{Analysis.}
\begin{theorem}\label{the:byz-kf}
There exists an algorithm that allows $k$ robots, up to $f$ of which are strong Byzantine robots, that are initially arbitrarily located on an $n$ node capacitated graph to solve Byzantine dispersion in $O(X(n) + n^3)$ rounds when $f <  (k-1)/3 $. Furthermore, $k$ and $f$ must be known to the shepherd.
\end{theorem}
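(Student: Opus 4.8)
The plan is to mirror the three-stage structure of the algorithm and argue, as in the proof of Theorem~\ref{the:byz-nk}, first the running time and then correctness stage by stage, with the extra care forced by $n$ being unknown and gathering being done through geometrically growing UXS parameters. For the running time: stage one ends within $O(X(n))$ rounds, because there are at least $k-f-1$ non-Byzantine non-shepherd robots and each of them completes a UXS with parameter $\ge n$ within $\sum_{i=1}^{\lceil\log n\rceil} X(2^i) = O(X(n))$ rounds of the common start time (the sum is within a constant factor of its last term $X(2^{\lceil\log n\rceil}) = O(X(n))$ since $X$ grows polynomially), and the shepherd is stationary throughout stage one, so such a traversal reaches it; hence at least $k-f-1$ robots are gathered with the shepherd by then, possibly earlier if Byzantine robots rush in. Stage two is one execution of Explorer-Pebble, $O(n^3)$ rounds. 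In stage three the shepherd knows $n$ (from stage two) and the current round $t_0 = O(X(n)+n^3)$, and computes a waiting bound $r$; the point is that the doubling UXS parameter $2^i$ of any still-wandering robot at round $t_0$ satisfies $X(2^{i-1}) \le t_0$ and so is $O(n)$ for the exploration routines considered in this paper, hence the next full UXS that robot runs lies entirely inside stage three (during which the shepherd is again stationary) and, after at most $O(X(n))$ more rounds, has parameter $\ge n$ and thus reaches the shepherd; the final depth-first settling walk costs $O(n)$, and summing gives $O(X(n)+n^3)$.

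For correctness of gathering and map creation: stage one terminates because the at least $k-f-1$ non-Byzantine non-shepherd robots are guaranteed to find the stationary shepherd. Of whatever $k-f-1$ robots arrive at most $f$ are Byzantine, so at least $k-2f-1$ are non-Byzantine and carry genuine, pairwise-distinct IDs lying in the ID set the shepherd collects at the start of stage two. During stage two the shepherd accepts a group as the pebble only when it contains at least $k-2f-1$ robots whose claimed IDs lie in that collected set. The non-Byzantine core of size $\ge k-2f-1$ faithfully plays the passive pebble, so Explorer-Pebble runs correctly; and no spurious pebble can be assembled, because the hypothesis $f<(k-1)/3$ is exactly the inequality $f < k-2f-1$, so the at most $f$ Byzantine robots cannot meet the threshold by themselves, and wandering non-Byzantine robots cannot help them (their genuine IDs are not in the collected set, and in any case a non-Byzantine robot follows only the detectable shepherd, never a Byzantine group). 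Consequently, at the end of stage two the shepherd holds an accurate map and in particular knows $n$.

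For correctness of dispersion: by construction the stage-three wait catches every straggler --- after $r$ rounds all non-Byzantine robots are co-located with the shepherd, irrespective of when stage two started, since the bound $r$ dominates the time for the slowest still-wandering robot to complete a full UXS of parameter $\ge n$ against the now-stationary shepherd. The shepherd then (re-)collects claimed IDs and performs the depth-first settling walk exactly as in Theorem~\ref{the:byz-nk}: at each positive-capacity node $u$ it instructs the at most $c(u)$ lowest unsettled claimed IDs to settle there. Because the non-Byzantine robots carry distinct genuine IDs, at most $c(u)$ non-Byzantine robots settle at any $u$; because there are at most $k$ distinct claimed IDs and $\sum_v c(v) \ge k$, every claimed ID --- hence every non-Byzantine robot --- is settled; Byzantine robots are left unconstrained, which the problem statement permits. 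Robots terminate upon settling and the shepherd terminates after settling itself, so Byzantine dispersion is achieved. I expect the main obstacle to be the stage-three timing argument: one must carefully bound how far the doubling UXS parameters of the not-yet-gathered non-Byzantine robots have grown by the start of stage three and confirm that the shepherd can choose a wait $r$ that is simultaneously $O(X(n)+n^3)$ and long enough for the slowest such robot to complete a full parameter-$\ge n$ UXS while the shepherd stays put; the stage-two pebble-spoofing bound is the other delicate point, and it is exactly where the hypothesis $f<(k-1)/3$ is consumed.
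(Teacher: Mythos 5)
Your proposal matches the paper's proof in both structure and substance: the same three-stage argument, the same use of the threshold $k-2f-1$ together with the recorded ID set to prevent pebble spoofing (with $f<(k-1)/3$ consumed exactly where the paper consumes it), and the same stage-three timing argument bounding the doubling UXS parameter of a straggler at the start of the wait so that one more complete parameter-$\ge n$ UXS fits in $O(X(n)+n^3)$ rounds while the shepherd is stationary. Your write-up is, if anything, slightly more explicit than the paper's on the geometric-sum and straggler bookkeeping; the intermediate claim that the straggler's parameter is $O(n)$ is not needed (only $X(2^i)=O(X(2^{i-1}))=O(X(n)+n^3)$ is), but this does not affect the conclusion.
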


\begin{proof}
We first argue correctness and then run time. At the end of stage one, the shepherd is present with at least $k-f-1$ other robots. In stage two, the only issue that may arise is that a set of Byzantine robots may try to fool the shepherd into thinking that they are the pebble. This is circumvented since of the at least $k-f-1$ robots that can act as the pebble, of which $f$ of them may be Byzantine robots, it is the case that at least $k-2f-1$ of them are non-Byzantine (since $f<(k-1)/3$). Thus, there are always enough non-Byzantine robots to form the pebble. Furthermore, to avoid issues where the shepherd may mistake another set of good robots with the robots acting as a pebble, we have the shepherd record the IDs of all the robots that gathered with it at the beginning of stage two. Since robot IDs are unique and there are $f< (k-1)/3)$ strong Byzantine robots, it cannot be the case that the Byzantine robots can work together with another set of good robots to pretend to be the pebble. 

Now we argue about the run time. Stage one takes at most $O(X(n))$ rounds to complete as when $2^i > n$, it is guaranteed that every non-Byzantine robot can explore the entire graph using the associated UXS. Stage two takes $O(n^3)$ rounds. Finally, for stage three, note that the total amount of time that stage one and stage two combined take is $O(X(n) + n^3)$. For any non-Byzantine robots still running a UXS, let the current input parameter to the UXS be some $2^j$. There are two cases, either $X(n) = o(n^3)$ or $X(n) = \Omega(n^3)$. In the former case, the time taken for the next UXS will be some $O(n^3)$, in which case, the time until stage three is done is $O(X(n) + n^3)$. In the latter case, the time taken for the next UXS will be some $X(O(n)) = O(X(n))$, in which case the time until stage three is done is $O(X(n) + n^3)$. The time taken for the shepherd to lead the other robots to settle down, once it has the map, is subsumed in that value. Thus, the total time until the algorithm is terminated is $O(X(n) + n^3)$ rounds.
\end{proof}

%=============================
%===========================
\subsection{Replacing the Shepherd with an Input Condition}
\label{subsec:replace-shepherd}

In this section, we describe an algorithm that allows robots to achieve Byzantine dispersion on a capacitated graph without the use of a shepherd and that tolerates up to $k-1$ strong Byzantine robots, so long as the total capacity of the graph is greater than or equal to $cf + k -f$, where $c$ is the number of non-zero capacity nodes. Furthermore, unlike the previous two algorithms, only the knowledge of $n$ is required (not $f$ and not $k$), but it must be known to all the robots.\\

%=============================
\noindent \textbf{Brief Description.}
Each robot $R$ explores the graph according to the universal exploration sequence with input parameter $n$ for $X(n)$ rounds.  At each timestep, at each node, the robots present check its remaining capacity and then that many robots settle there while the remainder (if any) keep exploring. After $X(n)$ rounds are over, the robot terminates the algorithm. \\

%=============================
\noindent \textbf{Detailed Algorithm.}
Consider some robot $R$. In a given round $i$, $1 \leq i \leq X(n)$, $R$ does the following depending on whether it has settled at the given node $u$ with capacity $c_u$ or not. If $R$ has not already settled at $u$ in a previous round, then it communicates with any other robots that may be present at the node to determine the set of robots already settled at that node, $S(u)$, and the set of robots that currently want to settle at the node in this round, $W(u)$. $R$ communicates with the other robots in $W(u)$ to determine its rank, $rank_R$, based on the ordering of the IDs of the robots in the set. If $|S(u)| + rank_R \leq c_u$, then $R$ settles at the node. Else if $|S(u)| + rank_R > c_u$, then $R$ moves to a new node according to the UXS. Note that if a strong Byzantine robot attempts to imitate more than one robot, only the highest ID is used to calculate its rank. (Recall that the model restricts strong Byzantine robots from successfully impersonating two different robots. Thus, even if the robot pretends to send messages originating from two different robots, the other robots are aware of this and can connect both messages back to the same robot.)

If $R$ settles at some node $u$ in some round $i$, then in subsequent rounds, it stays at $i$ and communicates with other robots as needed.

Finally, in round $X(n)$, all robots terminate the algorithm.\\

%=============================
\noindent \textbf{Analysis.}
\begin{theorem}\label{the:byz-input-condition}
There exists an algorithm that allows $k$ robots who all know $n$, up to $f$ of which are strong Byzantine robots, that are initially arbitrarily located on an $n$ node capacitated graph, to solve Byzantine dispersion in $X(n)$ rounds when $f < k$ and the total capacitance of the graph is $\geq cf+k - f$, where $c$ is the number of non-zero capacity nodes. %Furthermore, $n$ must be known to all robots.
\end{theorem}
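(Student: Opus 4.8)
The plan is to verify three claims: the $X(n)$ round bound, \emph{safety} (at termination no node holds more non-Byzantine robots than its capacity), and \emph{liveness} (every non-Byzantine robot settles within $X(n)$ rounds). The round bound is immediate, since by construction every robot follows its UXS of length $X(n)$ and then halts. For safety, fix a node $u$ with $c(u)>0$ and let $t$ be the last round in which some non-Byzantine robot settles at $u$ (if no such round exists we are done). In round $t$ the non-Byzantine robots that settle at $u$ are exactly those of $W(u)$ whose rank is at most $c(u)-|S(u)|$, and since ranks within $W(u)$ are distinct there are at most $c(u)-|S(u)|$ of them; meanwhile every non-Byzantine robot already settled at $u$ before round $t$ is present at $u$ and, in the shouting model, announces itself, so it lies in $S(u)$ in round $t$. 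Since no non-Byzantine robot settles at $u$ after round $t$, the final count of non-Byzantine robots at $u$ is at most $|S(u)| + (c(u)-|S(u)|) = c(u)$; Byzantine robots can only inflate $|S(u)|$, which makes honest robots \emph{more} conservative, so safety cannot be broken.

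The heart of the argument is liveness, which is where the hypothesis $\sum_v c(v) \ge cf + k - f$ enters (recall $c$ is the number of positive-capacity nodes). Suppose for contradiction that some non-Byzantine robot $R$ never settles; then at most $k-f-1$ non-Byzantine robots settle in total. For each node $u$ with $c(u)>0$ let $a_u$ be the number of non-Byzantine robots settled at $u$ at termination, so $a_u \le c(u)$ by safety and $\sum_u a_u \le k-f-1$. I will show $c(u)-a_u \le f$ for every positive-capacity node $u$. Granting this, summing over the $c$ such nodes gives $\sum_v c(v) = \sum_u c(u) \le \sum_u a_u + cf \le (k-f-1)+cf$, contradicting $\sum_v c(v) \ge cf+k-f$ and finishing the proof.

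To prove $c(u)-a_u \le f$: since $R$ runs a UXS with parameter $n$ it visits $u$; let $t_u$ be the last round $R$ is at $u$, and note $R\in W(u)$ in round $t_u$ because $R$ is unsettled. Write $|S(u)| = h+s$ in round $t_u$, where $h$ is the number of non-Byzantine robots already settled at $u$ (so $h \le a_u$, as settled honest robots never move) and $s$ is the number of Byzantine robots falsely claiming to be settled at $u$ (so $s \le f$). If $|S(u)| \ge c(u)$ then $c(u) \le h+s \le a_u+f$ and we are done. Otherwise $c(u)-|S(u)| \ge 1$ and $R$ is turned away because $|S(u)| + rank_R > c(u)$, hence the robots of $W(u)$ with rank in $\{1,\dots,c(u)-|S(u)|\}$ all satisfy the settle condition (there are exactly that many, since $|W(u)| \ge rank_R > c(u)-|S(u)|$); say $p$ of them are non-Byzantine and $q$ Byzantine, so $p+q = c(u)-|S(u)| = c(u)-h-s$. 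Each of those $p$ non-Byzantine robots settles at $u$ and stays there forever, so $a_u \ge h+p = c(u)-s-q$. Finally, the $s$ Byzantine robots claiming to be settled at $u$ and the $q$ Byzantine robots of $W(u)$ that settle at $u$ in round $t_u$ are disjoint subsets of the Byzantine robots located at $u$ in that round — disjoint because a robot cannot simultaneously belong to $S(u)$ and to $W(u)$, and no Sybil impersonation is possible in the model — so $s+q \le f$. Hence $c(u)-a_u \le s+q \le f$. Termination is built into the algorithm, and since a robot only settles at a node of positive capacity, safety and liveness together yield a legal Byzantine dispersion configuration.

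The routine parts are the round bound and safety; the main obstacle is the liveness estimate $c(u)-a_u \le f$. The subtlety is that a node can look nearly full to $R$ yet end with spare capacity, because higher-priority robots — possibly including Byzantine robots who claimed small IDs — absorb the apparent room in the very round $R$ passes through. One must argue that the only way capacity is \emph{permanently} lost at a node is through Byzantine robots who physically occupy or falsely occupy a slot there, and the no-Sybil property of the model is precisely what caps this loss at $f$ for a single node; this is why the required slack has the shape $cf$ (one Byzantine robot's worth per positive-capacity node) on top of the $k-f$ genuine demand, and why a cruder accounting would need $2cf$ or more.
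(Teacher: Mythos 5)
Your proof is correct, and it takes a genuinely different (and substantially more careful) route to liveness than the paper does. The paper's proof asserts that the capacity hypothesis forces every non-zero-capacity node to have capacity at least $f$ and then argues informally that the residual $k-f$ units of capacity suffice for the honest robots; that first assertion does not actually follow from $\sum_v c(v) \ge cf + k - f$ (e.g.\ $c=2$, $f=3$, $k=10$ admits capacities $1$ and $12$), so the paper is really leaning on the same global counting that you make precise. Your route is a per-node deficit lemma: if some honest robot $R$ never settles, then at each positive-capacity node $u$ the final shortfall $c(u)-a_u$ is bounded by the number of Byzantine robots physically present at $u$ on $R$'s last visit --- those faking membership in $S(u)$ plus those grabbing low ranks in $W(u)$, disjoint sets by the no-Sybil assumption --- hence by $f$; summing over the $c$ positive-capacity nodes contradicts the capacity hypothesis. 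This is exactly the accounting that justifies the $cf$ term, and you also supply the safety half (no node ends up over capacity in honest robots), which the paper omits entirely. What your version buys is a proof that actually withstands the adversarial behaviours the theorem permits; what the paper's buys is brevity.

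One small wrinkle: you write that if $R$ never settles then ``at most $k-f-1$ non-Byzantine robots settle,'' which presumes there are exactly $f$ Byzantine robots. With $f'\le f$ actual Byzantine robots the correct bound is $k-f'-1$, but your deficit bound correspondingly improves to $c(u)-a_u\le f'$, and the chain $\sum_u c(u) \le (k-f'-1)+cf'$ still contradicts $\sum_u c(u)\ge cf+k-f$ since $(c-1)f'\le (c-1)f$. So the argument survives; just run it with the actual number of Byzantine robots throughout.
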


\begin{proof}
The time complexity of the algorithm is easy to see as all robots must terminate the algorithm in $X(n)$ rounds. As for correctness, we argue that each non-Byzantine robot $R$ eventually finds a node $u$ such that it can settle at that node. By the condition of the initial graph, we see that each node that has a non-zero capacity has capacity at least $f$. That is, each node has enough capacity to host all $f$ Byzantine robots at each node. Additionally, there is enough additional capacity ($k-f$) distributed among the $c$ nodes to allow all the non-Byzantine robots to settle at those nodes even if the at most $f$ Byzantine robots were to be present at each node whenever a non-Byzantine robot visits it. By the properties of a UXS, it is guaranteed that $R$ will eventually visit all nodes in time $X(n)$. Thus, $R$ will eventually find a node to settle down at.
\end{proof}

%=============================
\section{The Use of a Shepherd in the Capacitated Model with no Byzantine Robots}
\label{sec:vanilla-shepherd}
In this section, we show how a trusted shepherd can be beneficial even in the absence of Byzantine robots. In Section~\ref{subsec:vanilla-alg-analysis}, we present a wrapper algorithm that allows one to solve dispersion on a capacitated graph using any pre-existing dispersion algorithms for an uncapacitated graph. We make the assumption that $k \geq n$ for this algorithm, which is natural  (e.g., consider cars navigating a city in search of parking where roads are edges and garages are nodes). In Section~\ref{subsec:vanilla-handling-klessn}, we show how to remove this assumption at the expense of a possibly larger run time and possibly more memory required by the shepherd.\\

\noindent \textbf{Motivation and Technical Challenges:} 
A typical dispersion algorithm would explore the graph by settling robots on vacant nodes as they are encountered, and using those settled robots as signposts for unsettled robots.  That process will not work for the capacitated version: Suppose we have to disperse $k$ robots on an $n$ node graph where only two nodes have non-zero capacity.   We have a contradiction once the first but not the second non-zero-capacity node is found.  The robots should settle there (and leave their zero-capacity nodes unsettled), but on the other hand the robots must remain on the zero-capacity nodes to facilitate exploring the rest of the graph and finding the second non-zero capacity node.

The situation becomes even more difficult if the number of robots $k$ is less than the number of nodes $n$, in which case the robots may not even have enough memory collectively to store a map of the graph. Our algorithm successfully overcomes both of these issues.\\

\noindent \textbf{Notes and Assumptions:} We note that pre-existing algorithms for dispersion often assume that either $k=n$ or $k<n$, but those algorithms can be easily converted into algorithms for $k \geq n$ when the value of $\lceil k/n \rceil$ is known to the robots: Instead of assuming only $1$ robot may settle at a node, they may assume that up to $\lceil k/n \rceil$ robots may settle at each node and act appropriately. An additional $O(\log (k/n))$ bits of memory per robot is needed, however that is subsumed by the $O(\log k)$ bits for each robot to store its own unique ID.

We assume that all robots have access to a dispersion algorithm $\mathcal{A}$, know its run time $T_{\mathcal{A}}$ and have enough memory $M_{\mathcal{A}}$ to run $\mathcal{A}$. We require that all robots know $n$ and $k$ and also whatever assumptions are needed in order to run $\mathcal{A}$.

\subsection{Handling $k \geq n$ Robots}
\label{subsec:vanilla-alg-analysis}
%=============================
As discussed above, the primary challenges for dispersion across a capacitated graph are creating a map of the graph and then determining where each robot should settle.  We use a shepherd to deal with both difficulties.\\

\noindent \textbf{Brief Description.}
Our wrapper algorithm consists of three phases. In phase one, all robots run an already existing dispersion algorithm $\mathcal{A}$, treating the graph as uncapacitated. In phase two, the shepherd performs a depth first search (DFS) of the graph to construct a map of the graph using the temporarily settled robots to differentiate nodes. In phase three, the shepherd first collects all the robots using a DFS  as before, then performs a second DFS allocating robots to nodes subject to capacity constraints. When the shepherd allots a robot to a node, that robot terminates the algorithm.  Once all robots (including the shepherd) are allotted to nodes, the shepherd terminates.\\

\noindent \textbf{Detailed Algorithm.}
The algorithm has three phases.  In the first phase, all robots disperse over the graph using a pre-existing dispersion algorithm $\mathcal{A}$ with run time $T_{\mathcal{A}}$ for uncapacitated graphs as a black box. Each robot implements the algorithm naively and does not pay attention to the capacity of the nodes. 
At the end of $T_{\mathcal{A}}$ rounds, all robots proceed to phase two. 

In the second phase, the non-shepherd robots remain where they were at the end of phase one, while the shepherd constructs a map of the graph (adjacency list representation of the graph) using a DFS. In particular, each non-shepherd robot communicates with the shepherd when they are co-located so that the shepherd can use the ID of the lowest ID robot residing at a node as an ID for that node. In this manner, the shepherd can perform a DFS over the graph and record the connections between nodes as well as the capacities of various nodes. Note that if $k=n$, then there will be exactly one node without a non-shepherd robot residing at it; the shepherd can still accurately construct the map of the graph using its own ID to designate that node. At the end of the DFS, the shepherd knows that the map of the graph is constructed and moves to phase three. Note that the non-shepherd robots are not immediately aware that the shepherd has moved to phase three.

In the third phase, the shepherd first performs a DFS of a spanning tree of the graph (using the map constructed in phase two) and collects all the other robots. In particular, when the shepherd reaches a node with some robots that have been there from phase one, the shepherd tells these robots that it is now in phase three and those robots subsequently follow the shepherd wherever it goes. Once the remaining $k-1$ robots have been collected and are following the shepherd, it starts a new DFS and tells the remaining robots where to settle before finally settling at some node. In particular, the shepherd keeps track of which nodes have already had robots settled at them and when the shepherd and other robots reach a new node $u$ with capacity $c(u)$, the shepherd instructs at most $c(u)$ \textit{other} robots with the smallest IDs among the set of robots traveling together to settle down at the node and terminate the algorithm. (Even if the shepherd has a smaller ID than some of the other robots, it does not settle down until all the remaining robots are settled.) Once only the shepherd remains, it settles down at the first available node with sufficient capacity and also terminates. \\

Theorem~\ref{vanilla}, given below, captures the properties of the algorithm.

\begin{theorem}\label{vanilla} Assume that $k$ robots have access to an algorithm $\mathcal{A}$ that solves dispersion on an uncapacitated graph of $n$ nodes in time $T_{\mathcal{A}}$ and requires each robot to have $M_{\mathcal{A}}$ bits of memory. If $k \geq n$ and all the robots know the value of $T_{\mathcal{A}}$, $n$, $k$, and whatever other global knowledge is required to run $\mathcal{A}$, then there exists an algorithm to solve dispersion of $k$ robots on an $n$ node capacitated graph in time $O(T_{\mathcal{A}}+m)$, where $m$ is the number of edges of the graph, with a memory requirement of $O(M_{\mathcal{A}} + m \log k)$ bits of memory for a shepherd robot and $O(M_{\mathcal{A}})$ bits of memory for the remaining $k-1$ robots.
\end{theorem}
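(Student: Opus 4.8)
The plan is to verify, in turn, the running time, the memory bounds, and correctness of the three‑phase wrapper, using $\mathcal{A}$ only as a black box.

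\emph{Running time.} Phase one lasts exactly $T_{\mathcal{A}}$ rounds: all robots know $T_{\mathcal{A}}$ and so leave this phase simultaneously. Phase two is a single depth‑first search by the shepherd; since a DFS crosses each edge a constant number of times, it costs $O(m)$ rounds. Phase three is two DFS traversals — one to gather the other $k-1$ robots into a group behind the shepherd, one to settle them — together with a final walk of the shepherd, along the already‑known map, to a node with spare capacity; this is again $O(m)$ rounds. Summing gives $O(T_{\mathcal{A}}+m)$.

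\emph{Memory.} A non‑shepherd robot runs $\mathcal{A}$ in phase one ($M_{\mathcal{A}}$ bits, which already includes storing its $O(\log k)$‑bit ID), and in the later phases needs only $O(1)$ state bits to know which phase it is in, to echo its ID to the shepherd, and to follow the shepherd; so $O(M_{\mathcal{A}})$ bits suffice. The shepherd additionally stores an adjacency‑list map of $G$ in which every node is tagged by a robot ID and annotated with its capacity and ``filled'' status: $O(m)$ entries of $O(\log k)$ bits, i.e.\ $O(m\log k)$ extra bits, for a total of $O(M_{\mathcal{A}}+m\log k)$.

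\emph{Correctness.} I would establish three claims. (i) At the end of phase one every node carries a unique, time‑stable identifier the shepherd can read off: since $\mathcal{A}$, in its $\lceil k/n\rceil$‑per‑node form, produces a valid dispersion of $k\ge n$ robots, every node is occupied, and the shepherd names a node $v$ by the least ID of a robot parked at $v$; the single borderline case $k=n$, where exactly one node holds only the shepherd, is absorbed by letting the shepherd use its own ID there. Because a robot sits at a unique node, distinct nodes get distinct names, so in phase two the shepherd's DFS — backtracking whenever it reads a name already in its map, and recording port numbers and capacities as it goes — reconstructs a correct adjacency‑list map of $G$. Crucially the shepherd detects locally when this DFS is finished (every incident port of every discovered node has been explored), so the non‑shepherd robots never need to track the global round count past $T_{\mathcal{A}}$ and may simply wait for the shepherd to reach them. (ii) In phase three the first DFS, run on a spanning tree of the now‑known map, visits every node and hence collects all $k-1$ other robots into one group. (iii) The second DFS assigns, at each newly visited node $u$, the at most $c(u)$ lowest‑ID still‑unsettled members of the group to $u$; distinctness of IDs keeps each node within its capacity, and $\sum_{v} c(v)\ge k$ ensures the group is exhausted before the map is, leaving at least one unit of capacity somewhere, to which the shepherd then walks (using the map) and settles. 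Each robot terminates on settling, so the run ends in a valid capacitated dispersion.

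The step I expect to be the real obstacle is claim (i): pinning down that the shepherd always stands on an identifiable node during its phase‑two DFS. This is exactly where the hypothesis $k\ge n$ earns its keep — together with the $\lceil k/n\rceil$‑per‑node reading of $\mathcal{A}$ and the shepherd's own ID as a one‑off spare label, it rules out the shepherd ever occupying a node it cannot name, which is what makes the map construction (and hence all of phase three) well defined. Once claim (i) is secured, phases two and three are routine DFS bookkeeping.
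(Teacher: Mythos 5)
Your proposal is correct and follows essentially the same three-phase argument as the paper's own proof: the same running-time and memory accounting, and the same three correctness claims (every node occupied after phase one since $k\geq n$, map construction via least-robot-ID node labels with the shepherd's own ID covering the $k=n$ case, and capacity-respecting settlement by lowest IDs in phase three). The additional detail you give on DFS termination detection and the shepherd's final settling step merely elaborates on what the paper leaves implicit.
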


\begin{proof}
The running time is simple to see. The first phase takes $T_{\mathcal{A}}$ time, the second phase takes $O(m)$ time, and the third phase takes $O(n)$, resulting in a total run time of $O(T_{\mathcal{A}} + m)$.

As for the memory requirement, the non-shepherd robots only need enough memory run the blackbox dispersion algorithm $\mathcal{A}$. The shepherd, in addition to this, needs sufficient memory to maintain a map of the graph. An adjacency list representation of the graph requires $O(m \log k)$ bits corresponding to $O(\log k)$ bits for each of the $m$ edges since the nodes take on IDs corresponding to the $O(\log k)$ bits IDs of the robots. In this map, for each of the $n$ nodes, we additionally need $O(\log k)$ bits to represent its capacity.

As for the correctness of the algorithm, we prove the following: (i) at the end of phase one, each node has at least one robot settled on it, (ii) at the end of phase two, the shepherd has a map of the graph, and (iii) at the end of phase three, dispersion on the capacitated graph has been achieved. In phase one, since all robots execute a pre-existing dispersion algorithm and since $k \geq n$, it is easy see that at the end of the phase, each node has at least one robot on it. In phase two, either all nodes will have a non-shepherd robot settled on them or there will exist exactly one node without a non-shepherd robot settled on it. In either case, each node can be uniquely identified and so that the shepherd may conduct a DFS to accurately map the graph. In phase three, the shepherd first collects all the robots using a DFS of a spanning tree of the graph and then tells these robots where to settle before settling itself using another DFS of a spanning tree of the graph. Thus at the end of phase three, dispersion on the capacitated graph is achieved.
\end{proof}

The current best known algorithm for dispersion on an uncapacitated graph is that of~\cite{KS22}, which allows $k$ robots initially arbitrarily located on an $n$ node graph (where $k \leq n$) to achieve dispersion in $O(\min \lbrace m, k \Delta \rbrace)$ time, where $\Delta$ is the maximum degree of the graph and $m$ is the number of edges, and requires each robot to have $O(\log (k + \Delta))$ bits of memory. From the previous theorem and our discussion just prior to Section~\ref{subsec:vanilla-alg-analysis} on how to convert dispersion algorithms on uncapacitated graphs that work for $k \leq n$ to dispersion algorithms on uncapacitated graphs that work for $k \geq n$, we have the following corollary.

\begin{corollary}
Assume that $k$ robots have access to the dispersion algorithm $\mathcal{A}$ from~\cite{KS22} for uncapacitated graphs. If $k \geq n$ and all the robots know the values of $O(\min \lbrace m, k \Delta \rbrace)$ (where $m$ is the number of edges of the graph and $\Delta$ is the maximum degree of the graph), $n$, $k$, and the global knowledge requirements of algorithm $\mathcal{A}$, then there exists an algorithm to solve dispersion of $k$ robots on an $n$ node capacitated graph in time $O(m)$ with a memory requirement of $O(m \log k)$ bits of memory for the shepherd robot and $O(\log (k + \Delta))$ bits of memory for the remaining $k-1$ robots.
\end{corollary}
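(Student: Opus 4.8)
The plan is to instantiate Theorem~\ref{vanilla} with the concrete algorithm $\mathcal{A}$ of~\cite{KS22}, after first adapting that algorithm so that its hypotheses match those of the theorem. The only mismatch is that~\cite{KS22} is stated for $k \leq n$, whereas Theorem~\ref{vanilla} requires a dispersion algorithm valid when $k \geq n$. So the first step is to apply the conversion described just before Section~\ref{subsec:vanilla-alg-analysis}: replace the hard-coded bound of one settled robot per node by $\lceil k/n \rceil$, which every robot can compute since it knows $n$ and $k$. Call the resulting algorithm $\mathcal{A}'$. I would argue (or simply invoke that earlier discussion) that $\mathcal{A}'$ still solves dispersion on an uncapacitated graph, now in the regime $k \geq n$, with the same asymptotic running time $T_{\mathcal{A}'} = O(\min\{m, k\Delta\})$ and with per-robot memory $M_{\mathcal{A}'} = O(\log(k+\Delta)) + O(\log(k/n)) = O(\log(k+\Delta))$, since $k/n \leq k + \Delta$ makes the extra counter subsumed.

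Next I would plug $\mathcal{A}'$ into Theorem~\ref{vanilla}. Its hypotheses are met: $k \geq n$, and all robots know $T_{\mathcal{A}'} = O(\min\{m,k\Delta\})$, $n$, $k$, and whatever global knowledge $\mathcal{A}'$ needs (which is exactly that of $\mathcal{A}$, together with $\lceil k/n\rceil$, already covered by knowledge of $n$ and $k$). Theorem~\ref{vanilla} then yields an algorithm solving capacitated dispersion in time $O(T_{\mathcal{A}'} + m) = O(\min\{m,k\Delta\} + m) = O(m)$, using $O(M_{\mathcal{A}'} + m\log k) = O(\log(k+\Delta) + m\log k) = O(m\log k)$ bits for the shepherd and $O(M_{\mathcal{A}'}) = O(\log(k+\Delta))$ bits for each of the other $k-1$ robots. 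These are precisely the bounds claimed in the corollary.

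The only point requiring any care — the ``main obstacle,'' though it is minor — is verifying the conversion step: that raising the per-node occupancy threshold from $1$ to $\lceil k/n \rceil$ preserves correctness of~\cite{KS22} and does not inflate its time or memory beyond the stated asymptotics. This is exactly the content of the paragraph preceding Section~\ref{subsec:vanilla-alg-analysis}, so in the write-up I would either cite it directly or, for a referee wanting detail, observe that the exploration/settling structure of the algorithm is untouched and only the local stopping predicate (``settle here if fewer than the threshold are already settled here'') changes, so both $T_{\mathcal{A}}$ and $M_{\mathcal{A}}$ are unchanged up to the additive $O(\log(k/n))$ memory term that $O(\log k)$, hence $O(\log(k+\Delta))$, absorbs. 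Everything else is a direct substitution into Theorem~\ref{vanilla}.
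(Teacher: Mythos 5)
Your proposal is correct and matches the paper's own argument: the paper likewise obtains this corollary by applying the $k\le n \to k\ge n$ conversion described just before Section~\ref{subsec:vanilla-alg-analysis} to the algorithm of~\cite{KS22} and then instantiating Theorem~\ref{vanilla}, with the same arithmetic $O(\min\{m,k\Delta\}+m)=O(m)$ and the extra $O(\log(k/n))$ memory absorbed into $O(\log k)$.
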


%===================
\subsection{Handling Any Value of $k$ Robots}
\label{subsec:vanilla-handling-klessn}
The algorithm in Section~\ref{subsec:vanilla-alg-analysis} assumed that the number of robots $k$ was greater than or equal to the number of nodes $n$. Here, we describe a modification of the preceding algorithm to deal with scenarios where $k$ is less than $n$ or $k$ is unknown. To handle these situations, we use a UXS or the Explorer-Pebble routine, both described in Section~\ref{subsec:proc-other-papers}, rather than a pre-existing dispersion algorithm. In the below descriptions, we use $M_x$ to denote the memory required by a robot to construct and use a UXS.
\\
\\
\noindent \textbf{Brief Description.}
First, the shepherd runs a UXS with input parameter $n$ until it finds another robot. Next, these two robots perform the Explorer-Pebble routine where the shepherd acts as the explorer to construct a map of the graph which is stored with the shepherd. The shepherd learns the value of $k$ while constructing the map by noting how many robots are at each node.  After mapping, the shepherd runs phase three of the algorithm from Section~\ref{subsec:vanilla-alg-analysis} to collect and then disperse the robots. 

Note that in this algorithm, only the shepherd needs to know the value of $n$, i.e., $n$ does not need to be global knowledge, and $k$ can be unknown. Also, the non-shepherd robots only need $O(\log k)$ bits of memory each to store their unique IDs. Since $k$ may be less than $n$, the shepherd's memory required for the map is $O(m \log (nk))$.

\begin{theorem}
There exists an algorithm that allows $k$ robots that are initially arbitrarily located on an $n$ node capacitated graph with $m$ edges to solve dispersion in time $O(X(n) + n^3 + m)$, requires the shepherd to have $O(M_x + m \log (nk))$ bits of memory, where $M_x$ is the memory required to construct and use a UXS and $X(n)$ is the running time of that procedure, and the remaining robots to have $O(\log k)$ bits of memory. Only the shepherd needs to know the value of $n$.
\end{theorem}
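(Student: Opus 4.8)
The plan is to analyze the three-stage algorithm sketched above, establishing running time, then memory, then correctness, while treating the Explorer-Pebble routine (EMT of~\cite{DPP14}), the UXS of~\cite{R08}, and phase three of the algorithm of Theorem~\ref{vanilla} as black boxes whose guarantees we have already stated.

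For the running time, I would first observe that stage one costs at most $X(n)$ rounds: the shepherd runs a UXS with input parameter $n$, which it knows exactly, so it visits every node within $X(n)$ rounds; since $k>1$ and every non-shepherd robot stays put during stage one, the shepherd is co-located with some other robot by that time (possibly much sooner). Stage two is a single Explorer-Pebble execution with the shepherd as explorer and (at least) one other robot as pebble, costing $O(n^3)$ rounds. Stage three is phase three of the Theorem~\ref{vanilla} algorithm — one spanning-tree DFS to gather all robots followed by one spanning-tree DFS that settles them — which the shepherd can plan entirely from the map it now holds; this costs $O(n)$ rounds, which is $O(m)$. Summing the three stages gives $O(X(n)+n^3+m)$.

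For the memory bound, the non-shepherd robots only ever wait at a node, act as (part of) the pebble, or follow the shepherd while reporting their ID, so a constant-size state plus their own ID suffices: $O(\log k)$ bits. The shepherd needs $M_x$ bits to construct and run the UXS, plus an adjacency-list map in which it assigns its own labels $1,\dots,n$ to the nodes as it discovers them ($O(\log n)$ bits per endpoint, hence $O(m\log n)$ total), plus $O(\log k)$ bits per node to record that node's capacity and the count of robots it saw there; since the (connected) graph has $m\ge n-1$, the total is $O(M_x + m\log(nk))$ bits. For correctness I would argue three things in order: (i) by the stage-one analysis the shepherd ends stage one co-located with another robot; (ii) during stage two the pebble robot(s) carry known IDs, so the explorer never confuses them with the stationary robots it passes, whence EMT correctly reconstructs the graph (it does so even without knowledge of $n$), and since the explorer visits every node it can record every capacity and tally every stationary robot it ever sees, the pebble robot(s), and itself — exactly the $k$ robots; (iii) with the complete map and the known robot locations, the shepherd's first DFS collects the other $k-1$ robots, and because $\sum_{v}c(v)\ge k$ the greedy second DFS (at each visited node $u$, have the $c(u)$ smallest-ID still-travelling robots settle, or all of them if fewer than $c(u)$ remain, then have the shepherd settle last at any node with leftover capacity) leaves at most $c(u)$ robots at every node $u$; every robot terminates upon settling, and only the shepherd ever consults $n$.

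The main obstacle is item (ii): one must be careful that the shepherd's count is exactly right. It must not double-count the pebble, it must not mistake a stationary robot at an already-mapped node for the pebble — which is why the shepherd fixes the pebble's ID at the start of stage two — and it must be guaranteed that every non-shepherd robot is sitting at some node whenever the explorer passes through it, which holds because all non-pebble robots remain immobile throughout stages one and two. Once these points are nailed down, the remainder is bookkeeping inherited from the cited procedures and from the proof of Theorem~\ref{vanilla}.
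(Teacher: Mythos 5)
Your proposal is correct and follows essentially the same route as the paper, whose own proof is just the one-line remark that the result ``follows from the algorithm description and the proofs of Theorems~1 and~4'' (UXS gathering plus Explorer--Pebble mapping, then phase three of the wrapper algorithm); you have simply written out in full the bookkeeping the paper leaves implicit. Your extra care about the shepherd fixing the pebble's ID and not double-counting robots when tallying $k$ is a genuine detail the paper glosses over, but it does not change the argument's structure.
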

\begin{proof} Follows from algorithm description and proofs of Theorems 1 and 4.
\end{proof}

Notice that the running time may be large due to the UXS. If we assume that all robots are initially gathered, or that the starting configuration is such that at least one other robot starts on the same node as the shepherd, then we may skip the use of the UXS and directly move to the use of the Explorer-Pebble routine to construct the map of the graph. In this scenario, none of the robots need to know either the value of $n$ or $k$.

\begin{theorem}
There exists an algorithm that allows $k$ robots that are initially located such that the shepherd and at least one robot initially occupy the same node on an $n$ node capacitated graph with $m$ edges to solve dispersion in time $O(n^3 + m)$ and requires the shepherd to have $O(m \log (nk))$ bits of memory and the remaining robots to have $O(\log k)$ bits of memory. None of the robots need to know the value of $n$ or $k$.
\end{theorem}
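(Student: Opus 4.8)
The plan is to take the algorithm of the preceding theorem and simply delete the part that the hypothesis makes unnecessary. Since this section has no Byzantine robots and the shepherd starts on a node shared with at least one other robot, there is no need for the UXS-based gathering stage at all: the shepherd and one of its co-located companions can begin the Explorer-Pebble routine of Section~\ref{subsec:proc-other-papers} (the EMT procedure of~\cite{DPP14}) immediately, with the shepherd as the explorer and the companion as the pebble. By the guarantee of that routine, this constructs a complete adjacency-list map of the graph in $O(n^3)$ rounds \emph{with no robot knowing $n$ in advance}. During this stage every non-shepherd robot that is not serving as the pebble stays on its initial node, and — crucially — does not follow the shepherd when the explorer passes through; it moves only once explicitly told to in phase three.

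Second, while running Explorer-Pebble the shepherd additionally records $c(v)$ for each node $v$ it reaches and the number of robots it finds there; adding back the robots it is currently carrying as the pebble, it thereby learns the exact values of $n$ and $k$ by the end of the stage. I would then invoke phase three of the wrapper algorithm of Section~\ref{subsec:vanilla-alg-analysis} verbatim: using the stored map, the shepherd performs a DFS of a spanning tree to collect the remaining $k-1$ robots (which are exactly the robots still sitting on their initial nodes, and are all found because the shepherd visits every node), and then a second DFS during which, at each node $u$, it tells the $c(u)$ lowest-ID not-yet-settled robots currently travelling with it to settle and terminate, finally settling itself at a node with remaining capacity. Every robot gets a slot because $\sum_{v\in V} c(v)\ge k$ by the model, just as in the proof of Theorem~\ref{vanilla}; correctness of the map is inherited from the proof of Theorem~\ref{the:byz-nk} and correctness of the collection/settling procedure from the proof of Theorem~\ref{vanilla}. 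The running time is $O(n^3)$ for Explorer-Pebble plus $O(m)$ for the two phase-three traversals, i.e. $O(n^3+m)$. For memory, the shepherd stores the adjacency list ($O(m\log n)$ bits: it labels the $n$ discovered nodes itself and port numbers are $O(\log n)$ bits) plus one capacity per node ($O(\log k)$ bits each), which, since $m\ge n-1$, is $O(m\log(nk))$ bits; the $O(\log k)$ working bits it needs to compare IDs are subsumed. Each other robot only needs the $O(\log k)$ bits that hold its own ID.

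The main obstacle is purely one of bookkeeping around the pebble and the stay-put robots, and of making the argument uniform across the two extremes allowed by the hypothesis. On one hand, if \emph{all} robots are initially co-located with the shepherd, the ``pebble'' is the whole set of $k-1$ other robots and phase-three collection is immediate; on the other hand, if exactly one robot accompanies the shepherd, the other $k-2$ robots are scattered and genuinely have to be gathered by the phase-three DFS. The fix is to phrase the algorithm so that a non-shepherd robot remains on its node until the shepherd (in phase three) tells it to follow — this keeps the shepherd's count of $k$ correct during Explorer-Pebble, guarantees the scattered robots are where the phase-three DFS expects them, and makes both cases collapse to the same proof. Once this is pinned down, nothing in the algorithm depends on $n$ or $k$ being known beforehand, which gives the final clause of the statement.
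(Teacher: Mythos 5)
Your proposal is correct and follows essentially the same route as the paper: the paper's one-line proof ("follows from the preceding theorem") together with the surrounding prose says exactly what you spell out — drop the UXS gathering stage since a pebble partner is already co-located with the shepherd, run Explorer-Pebble (which works with $n$ unknown) to build the map in $O(n^3)$ rounds while recording capacities and robot counts, then reuse phase three of the wrapper algorithm to collect and settle everyone in $O(m)$ further rounds. Your additional bookkeeping about which robots stay put versus serve as the pebble, and the memory accounting, are consistent with the paper's intended argument.
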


\begin{proof} Follows directly from Theorem 5.
\end{proof}

%=============================
\section{Conclusion}
\label{sec:conclusion}

This work suggests several new lines of future research. Here we analyzed dispersion on graphs where nodes have different capacities; what if the robots have different needs as well? This is a model for when some robots need more of a given resource (e.g., space to park, energy at a charging station, etc.).  Coupled with capacitated graphs, this would be a more realistic analogue to many real-world situations.

Another line of research is exploring the utility of a trusted shepherd in other variants of the traditional models for dispersion and Byzantine dispersion. For instance, when dealing with dynamic graphs, could a slightly more powerful robot aid the algorithm designer? What power would be most helpful?

A third  line of research is to see whether the algorithms in this paper may be adapted to an asynchronous setting. Some of the algorithms, especially those for Byzantine dispersion, may appear to be asynchronous in nature as they are event driven (as opposed to time dependent) and the shepherd is the robot that tells other robots to terminate the algorithm. However, a key component of those algorithms is the communication inherent in each round. If there were no rounds, then it becomes difficult to tell whether all robots co-located at a node had the chance to communicate with one another. It especially becomes difficult to differentiate between a relatively ``slow'' robot and one that is acting in a Byzantine manner by refusing to communicate.  Perhaps the additional power of the trusted shepherd could be used to overcome these challenges.

%
% ---- Bibliography ----
%
% BibTeX users should specify bibliography style 'splncs04'.
% References will then be sorted and formatted in the correct style.
%
 %\bibliographystyle{splncs04}
\bibliographystyle{ACM-Reference-Format}
 %\bibliography{references}
 \bibliography{references-conf-short-names}

\end{document}